\documentclass[11pt]{article}
\usepackage{amsmath}
\usepackage{amsfonts}
\usepackage{latexsym}
\usepackage{amssymb}
\usepackage{graphicx}
\usepackage{amsthm}
\usepackage{color}
\usepackage{chemarrow}
\usepackage{fullpage}
\usepackage{framed}
\usepackage{verbatim}
\usepackage{epsfig}
\usepackage{hyperref}
\usepackage{authblk}

\title{The Hot Bit I: The Szilard-Landauer Correspondence.}
\author{\small Manoj Gopalkrishnan\thanks{manojg@tifr.res.in}}
\affil{School of Technology and Computer Science,\\ Tata Institute of Fundamental Research, Mumbai, India.}
\date{15 November, 2013}
\begin{document}
\maketitle

\theoremstyle{definition}
\newtheorem{theorem}{\bf{Theorem}}[section]
\newtheorem{lemma}[theorem]{\bf{Lemma}}
\newtheorem{corollary}[theorem]{\bf{Corollary}}
\newtheorem{proofl}[theorem]{\bf{Proof}}
\newtheorem{open}{\bf{Open}}
\newtheorem{remark}[theorem]{Remark}
\newtheorem{definition}{\bf{Definition}}[section]
\newtheorem{conjecture}{\bf{Conjecture}}[section]
\newtheorem{postulate}{\bf{Postulate}}[section]

\theoremstyle{remark}
\newtheorem{example}{\bf{Example}}[section]
\newtheorem{thought}{\bf{Thought Experiment}}[section]
\newtheorem{notation}[definition]{\bf{Notation}}
\newtheorem{fact}{\bf{Fact}}
\newtheorem{note}[definition]{\bf{Note}}
\newcommand{\CE}{{\cal E}}
\newcommand{\MC}{\mathbb{C}}
\newcommand{\MR}{\mathbb{R}}
\newcommand{\MZ}{\mathbb{Z}}
\newcommand{\MN}{\mathbb{N}}
\newcommand{\MQ}{\mathbb{Q}}
\newcommand{\iton}{i=1,2,\ldots,n}
\newcommand{\jtom}{j=1,2,\ldots, m}
\newcommand{\MP}{\mathbb{R}_{>0}}
\newcommand{\bs}{\boldsymbol}
\newcommand{\op}{\operatorname}
\renewcommand\comment[1]{{\color{magenta} $\star$#1$\star$}}
\begin{center}

\end{center}

\begin{abstract}
We present a precise formulation of a correspondence between information and thermodynamics that was first observed by Szilard, and later studied by Landauer. The correspondence identifies available free energy with relative entropy, and provides a dictionary between information and thermodynamics. We precisely state and prove this correspondence. The paper should be broadly accessible since we assume no prior knowledge of information theory, developing it axiomatically, and we assume almost no thermodynamic background.
\end{abstract}

\section{Introduction}
This is the first in a series of articles on the thermodynamics of the bit. In future papers, we will pursue the question of lower  bounds to the cost of switching a bit. Here we aim to give as direct as possible a description of the connection between information theory and thermodynamics, by combining different pieces of arguments which have been described in information theory and statistical mechanics literature.

In 1929, Leo Szilard~\cite{Szilard1929} pointed out that the Maxwell's demon paradox could be resolved by a postulate that certain information processing tasks required energy. The issue of which information processing tasks to charge for, and how much, was clarified by Rolf Landuauer in 1961~\cite{landauer61irreversibility}, and Charles Bennett in 1987~\cite{bennett1987demons}. All these works showed a relation between information and thermodynamics in the special case of the Szilard engine: a cylindrical vessel containing one molecule of an ideal gas, immersed in a heat bath. 

It appears to be less well-known that the relation between information and thermodynamics can be precisely stated, and proved, in much more generality, as is shown in a very nice paper \cite{esposito2011second} by Esposito and Van den Broeck. To do this requires making precise what one means by information, and giving an identification between information and thermodynamic quantities. In this paper, we present a pedagogic introduction to these ideas. We develop in a more detailed manner the argument for the identification of relative entropy with information, and the argument for the identification of relative entropy with free energy.

Below we list our contributions. Many of these ideas are well-known within specialist domains, but to our knowledge have not appeared together before.
\begin{itemize}
\item In Section~\ref{sec:infotheo}, we introduce an operational interpretation of relative entropy in the context of ensembles to argue for a certain relative entropy as the right measure for the amount of information an observer knows about a system.
\item The meanings of certain information processing operations like ``Copy,'' have been given only implicitly in previous works. In Subsection~\ref{ss:bitops}, we show the ambiguity that can result from this, and take the opportunity to disambiguate by giving precise definitions in the language of Bernoulli random variables.
\item We precisely state and prove a general correspondence between free energy and relative entropy (Theorem~\ref{lem:SL}). 
\item The interpretation of relative entropy as information content together with the identification of relative entropy with free energy immediately implies that increasing the information one knows about a system requires a corresponding increase of free energy. Thus, the bounds on information processing tasks that were asserted in the works of Szilard and Landauer appear as consequences of the Second Law (Remark~\ref{rmk:landprin}).
\item The Szilard-Landauer correspondence allows an alternate formulation of the Second Law in terms of information. This gives a physical interpretation to ``Second Law'' theorems in information theoretic settings that assert the monotonic decrease of relative entropy (Remark~\ref{rmk:2ndlaw}), in a setting more general than that in \cite{esposito2011second}.
\end{itemize}

\section{Information Theory}\label{sec:infotheo}
The identification of information content in a system with relative entropy is a familiar idea in information theory. In this section, we briefly explain these ideas for the benefit of those readers who may not have been exposed to information theoretic ideas before.

\begin{notation}
Let $n\in \MZ_{\geq 1}$ be a positive integer.
\begin{enumerate}
\item When $S$ is a finite set, $\Delta^S = \{ p\in\MR^S_{\geq 0} \mid \sum_{x\in S} p_x = 1\}$ denotes the probability simplex in $\MR^S$.
\item $0 \log 0$ is understood as $\displaystyle\lim_ {x\to 0^+} x \log x = 0$.
\item All logarithms are to the natural base of Euler's constant $e$.
\item $k_B$ denotes Boltzmann's constant.
\end{enumerate}
\end{notation}

\begin{remark}
We proceed by a well-known axiomatic development of information theory. Let us suppose that there is a function called ``self-information'' that assigns positive real numbers to events. If two events are equally likely, then their occurrences should be equally surprising, so we postulate that \textbf{equally likely events are equally informative.} Hence, we can write the ``self-information'' or ``surprise'' function $\op{SI}:[0,1]\to \mathbb{R}_{\geq 0}$, so that it takes a probability and returns a positive real number. We postulate that more probable events are less surprising, so $\op{SI}$ is monotonically decreasing. We further postulate that \textbf{if two events of probabilities $p$ and $q$ respectively are independent then the information gained by learning of the joint occurrence of both events equals the information gained from the occurrence of one event plus the information gained from the occurrence of the other event.} This leads to the equation $SI(pq) = SI(p) + SI(q)$. With appropriate boundary conditions and up to choice of units, this forces $SI(p)=\log\frac{1}{p}$ for all $p\in[0,1]$~\cite[p.~84, 4.3.1]{dieudonné1969foundations}.
\end{remark}

In this paper, we will work with natural logarithms, so that one bit of information corresponds to a value of $\log 2$ nats.

\begin{definition}[Entropy, Relative entropy]
Let $n\in\MZ_{\geq 1}$ be a positive integer and let $p,q\in\Delta^n$ with $q_i\neq 0$ for all $p_i\neq 0$. Then 
\begin{enumerate}
\item the (Shannon) {\em entropy} $\op{H}:\Delta^n\rightarrow\MR$ is the expected self-information $p\mapsto \sum_{i=1}^n p_i\log \frac{1}{p_i}$.
\item the {\em relative entropy} $D(p || q)$ is the function $\sum_{i\in \{1,2,\dots,n\}} p_i \log\frac{p_i}{q_i}$.
\end{enumerate}
\end{definition}

Relative entropy, also called the Kullback-Leibler divergence or information divergence, is a very familiar quantity in information theory and statistics. It has several well-known operational interpretations. We now present one of these operational interpretations adapted to the context of ensembles. 

\begin{remark}
Consider an ensemble of physical systems with a distribution $\pi$. Within this ensemble is a sub-ensemble of systems with a distribution $p$. For example, the distribution $p$ might represent the distribution $\pi$ conditioned on an observation taking a particular value. A sample system $S$ from the sub-ensemble is given to two observers $X$ and $Y$. Observer $X$ only knows that the system $S$ was sampled from the distribution $\pi$. Observer $Y$ knows that the system $S$ was sampled from the sub-ensemble $p$. For example, $Y$ would know the value of an observation on the system, whereas $X$ would not.

We claim that observer $Y$ possesses precisely $D(p || \pi)$ more nats of information about system $S$ than observer $X$. (This statement may be interpreted in expectation over many runs, if one wishes to avoid a Bayesian interpretation.)

To see this, note that the surprise to observer $X$ upon learning the precise state of the system is $\sum p_i \log \frac{1}{\pi_i}$ nats in expectation. The surprise to observer $Y$ upon learning the precise state of the system is $\sum p_i \log\frac{1}{p_i} = H(p)$ nats in expectation. Thus the extra information that observer $Y$ has over observer $X$ from knowing that the system $S$ is from the subensemble $p$ must equal $\sum p_i\log\frac{1}{\pi_i} - H(p) = \sum p_i \log \frac{p_i}{\pi_i} = D(p || \pi)$.

In particular, if $\pi$ is an equilibrium distribution (by this, we mean a steady-state distribution of some process that satisfies detailed balance), and system $S$ is drawn from an equilibrium ensemble, and $p$ is the new distribution obtained after conditioning on the result of a measurement on system $S$, then the amount of information obtained by that measurement is $D(p || \pi)$.
\end{remark}

\begin{remark}
Following Jaynes, we may define equilibrium by the condition that systems at equilibrium are maximally uninformative. Hence, we may adopt the convention that we have zero information about systems at equilibrium. With this convention, one may say that for a system with equilibrium distribution $\pi$, drawn from a $p$-ensemble, the relative entropy $D(p || \pi)$ represents the amount of information one knows about the system. We elevate this convention to a postulate.

\textbf{Postulate}: The information an observer knows about a system with equilibrium distribution $\pi$ drawn from a $p$-ensemble is $D(p||\pi)$ nats.\

This postulate may be considered a mathematical definition of the information an observer knows about a system. Note some consequences of this postulate:
	\begin{itemize}
	\item Since $D(p||\pi)$ is always non-negative, the amount of information an observer knows about the system can never be negative, and is zero iff $p = \pi$.
	
	
	\item If the system has two states, and each state is equally likely at equilibrium, and $p$ is the distribution $(1,0)$ then an observer has $1 \log \frac{1}{1/2} + 0 \log\frac{0}{1/2}=\log 2$ nats (or $1$ bit) of information.

	\item If the system has two states, and the equilibrium distribution is $(\pi_1,\pi_2)$, and $p$ is the distribution $(1,0)$ then an observer has $\log \frac{1}{\pi_1}$ nats of information. This is consistent with Shannon's formula for self-information. Intuitively, being in the possession of information that would greatly surprise an equilibrium observer equates to being well-informed.
	
	
	\item {\em The reader should feel free to skip this bullet on a first read.} Consider two correlated systems, call them $S_1$ and $S_2$, with joint distribution $p$, marginals $p_1$ and $p_2$, and equilibrium distributions $\pi_1$ and $\pi_2$. Define $\pi = \pi_1 \otimes\pi_2$, the tensor product distribution. By our postulate, the information in the joint system is given by $D(p || \pi)$. The information in $S_1$ and $S_2$ correspond to $D(p_1 || \pi_1)$ and $D(p_2 || \pi_2)$ respectively. There is also information in the correlation between the two systems, captured by the mutual information $I(S_1,S_2):=D(p || p_1 \otimes p_2)$. It is a well-known identity in information theory that $D(p || \pi) = D(p_1 || \pi_1) + D(p_2 || \pi_2) + I(S_1,S_2)$. This corresponds to our intuitive idea that the total information equals the information in $S_1$ plus the information in $S_2$ plus the information in the correlation.
	
	\end{itemize}

\end{remark}

\begin{remark}
Relative entropy measures the information that the observer knows about the system. Shannon entropy measures the information ``in'' the system, i.e., how much one would learn if one was told the exact state of the system. Jaynes~\cite{jaynes1957information} also refers to Shannon entropy as missing information. So a more random system has more Shannon entropy, and in this sense Shannon entropy is a measure of randomness, whereas relative entropy is a measure of information.
\end{remark}

\subsection{Bits and bit operations.}\label{ss:bitops}
For now, we will describe a bit by a Bernoulli random variable. (We will describe bits in more detail in a follow-up paper, where the origin of this Bernoulli random variable will become clear.) We will further assume that equilibrium corresponds to the distribution $\pi = (1/2,1/2)$. Note that $D(p || \pi) - D(q || \pi) = H(q) - H(p)$. So the increase in entropy relative to $\pi$ equals the decrease in Shannon entropy.

We will be particularly interested in three types of Bernoulli random variables, that we call $\mathbf{0},\mathbf{1}$, and $\textbf{*}$. The random variable of type $\mathbf{0}$ (respectively $\mathbf{1}$) takes the value $0$ (respectively $1$) with probability $1$. The random variable of type $\textbf{*}$ is equally likely to take the values $0$ and $1$. If two random variables $\textbf{*}_1$ and $\textbf{*}_2$ are correlated, by which we mean $\op{Pr}[\textbf{*}_1 = \textbf{*}_2] = 1$, we represent this by a dash $\textbf{*}_1 --- \textbf{*}_2$. If they are anticorrelated, by which we mean $\op{Pr}[\textbf{*}_1 = \textbf{*}_2] = 0$, we represent this by a cross $\textbf{*}_1 -\times - \textbf{*}_2$.

We define the following operations.

\begin{enumerate}
\item \textbf{Erasing} is defined as any operation that takes a random variable of type \textbf{*} to the random variable $\mathbf{0}$. Thus erasing is the process of replacing a random bit by a bit in the state $0$. In the process of erasing, we gain one bit of information about the system. This can be counterintuitive since one expects erasing to destroy information. A helpful way to think about this is that what is erased is not information, but {\em randomness} in the system. For example, when we erase a blackboard with chalk marks on it, we are taking the system from many possible states to a single state of no chalk marks, or all $0$'s. 

\item \textbf{Copying in the sense of Szilard} is defined as any operation that takes the pair $(\textbf{*}_1,\textbf{*}_2)$ of independent random bits to the correlated pair $(\textbf{*}_1 --- \textbf{*}_3)$. In other words, the first bit is left unaltered, while the second bit is evolved to contain a copy of the first bit.

\item\textbf{Copying in the sense of Landauer} is defined as any operation that takes the pair $(\textbf{*}_1,\mathbf{0})$ to the correlated pair $(\textbf{*}_1 --- \textbf{*}_3)$.

\item\textbf{NOT} is defined as any operation that takes the random variable $\textbf{*}_1$ to a new random variable $\textbf{*}_2$ where $\textbf{*}_1 -\times - \textbf{*}_2$. 

\item\textbf{Switching($0\to 1$)} is defined as any operation that takes the random variable $\textbf{0}$ to the random variable $\textbf{1}$. Switching($1\to 0$) is defined likewise.

\item \textbf{Randomizing} a bit is defined as any operation that takes a random variable $\textbf{0}$ to a random variable of type \textbf{*}. 
\end{enumerate}

Both erasing, as well as copying in the sense of Szilard, reduce Shannon entropy (and hence increase entropy relative to $\pi$) by $1$ bit, or $\log 2$ nats. Note that if we had only asked for these operations to succeed with some probability, Shannon entropy would reduce by less than $1$ bit. Copying in the sense of Landauer, NOT, Switching($0\to 1$) and Switching($1\to 0$) do not change Shannon entropy. Randomizing a bit increases Shannon entropy (and decreases relative entropy) by $1$ bit, or $\log 2$ nats. 

\section{The Szilard-Landauer Correspondence: A Special Case.}
The Szilard-Landauer correspondence (Theorem~\ref{lem:SL}) implies that operations that increase relative entropy require energy, and operations that reduce relative entropy can be exploited to do useful work. Before getting to the general case, we illustrate the ideas behind the Szilard-Landauer correspondence with the simple example of the Szilard engine, which allows easy calculations.

\subsection{The Szilard engine}
A ``Szilard engine'' consists of a single molecule of an ideal gas in a cylindrical vessel~\cite{Szilard1929}. We first calculate the work required to compress one molecule of an ideal gas from volume $V$ to volume $V/2$. (We will ignore stochastic effects. Since this example is intended more as an aid-to-intuition, this lack of rigour is pardonable.) In the isothermal limit, using the ideal gas law $PV = k_B T$, the work done on the gas equals
\[
W = - \int_{V}^{V/2} P dV = \int_{V/2}^V kT \frac{dV}{V} = kT \log V|_{V/2}^V = kT\log 2.
\]

Following Szilard, we will think of the position of the molecule as representing a bit. So we label the bit as being in the state ``0'' precisely when the molecule is in the left half of the vessel. When the molecule is in the right half of the vessel, we label the bit as being in the state ``1.'' We will represent this bit by a Bernoulli random variable $X$, so that $Pr[X=0] = 1 - Pr[X=1]$. Note that at equilibrium, $Pr[X=0] = Pr[X=1] = 1/2$.

\paragraph{Randomizing a bit:} We illustrate, following Landauer, how useful work can be obtained by randomizing a bit. Suppose that initially the bit is of type $\mathbf{0}$. We insert a partition at the center and attach a weight via a pulley to the partition. The molecule collides against the partition due to random motion, causing the weight to be lifted, thus doing useful work. To maximize the amount of useful work that can be obtained, one works in the isothermal limit. This means that The weight is carefully chosen to be infinitesimally lighter than the expected force being exerted by the molecule on the partition. In this case, the work done by the gas is {\em no more than} $kT\log 2$. In thermodynamics, the work that can be obtained by a system is quantified by free energy, hence we have observed a correspondence between information and free energy.

At the end of the lifting process, the single molecule of ideal gas could be anywhere in the volume $V$ of the cylindrical vessel, so the bit has changed its type from $\mathbf{0}$ to $\textbf{*}$. In other words, the bit has been randomized. Importantly, note that if the bit had been of type $\mathbf{1}$ then the weight would have been lowered, and the work done would have been negative, {\em unless} we had reconnected the pulley and the weight to the other side. So the information about the molecule's position is crucial in deciding the side to which the pulley and the weight connect. In particular, if one has no information about which side the molecule is on, then in expectation no work can be obtained from the system! 

By a similar calculation, the work required to ``erase'' the bit, so that a molecule that was initially uniformly distributed through the entire volume is confined to the left half of the vessel, equals the work required to compress one molecule of an ideal gas to half its volume, which equals {\em no less than} $k_B T\log 2$. 

\paragraph{Maxwell's demon:}
Though it was not our purpose, we have arrived at a position whence we can succinctly state Szilard's resolution to Maxwell's demon, as well as Landauer's clarification. If one wants to measure which side of the vessel the molecule is on, one needs to copy the bit $X$ to another bit $Y$ (implicitly assumed by Szilard to be of type $\textbf{*}$) which is part of the measuring apparatus. After performing the measurement, we are now in possession of information about the bit $X$, in the form of correlation (mutual information) between $X$ and $Y$. This information can be randomized to extract $k_B T \log 2$ units of energy in the reversible, isothermal limit. Therefore, Szilard concluded that preservation of the laws of thermodynamics requires that the act of copying must cost at least $k_B T\log 2$. (Of course, this holds only for perfect copying. For copying with some errors, a weaker bound is obtained.)

The power of this argument of Szilard's is its abstractness: it is blind to the actual dynamical implementation of Maxwell's demon: whether it be a trapdoor, or a camera, or a biological organism. So long as extracting useful work requires knowledge of the position of the molecule, Szilard's argument will hold. This can be contrasted with Smoluchowski's insightful partial resolution of Maxwell's demon, by calculations with the specific example of a ratchet and pawl mechanism~\cite{smoluchowski1912experimental}.

One complaint against Szilard's resolution is that it ``merely'' pushes the problem back one step. After all, how does one know that knowledge of the position of the molecule is required to do useful work, except from the second law? So isn't the argument circular? 

In our opinion, this complaint is correct, but also naive. The goal here is not to prove the second law. The advance made by Szilard consists in replacing the statement ``in expectation, Maxwell's demon can not reduce entropy'' by the more falsifiable statement ``in expectation, no useful work can be obtained from Szilard's engine without information about the position of the molecule.'' 

Landauer's work clarified that if copying is performed in the sense of Landauer, and not in the sense of Szilard --- so that $Y$ is initialized to a state of zero entropy --- then the laws of thermodynamics are safe even if the act of copying requires no energy. This is because, after copying, both $X$ and $Y$ are random, but correlated. Now we randomize $X$ to obtain energy, and we go to a state where $X$ and $Y$ are both random, and uncorrelated. Thus, in this case, the work done is being obtained by randomizing $Y$, and the laws of thermodynamics are not threatened.

Actually neither Szilard nor Landauer precisely state what they mean by copying. Their notions of copying are implicit, and treated as self-evident. When Landauer states that there is no cost to copying, he is perhaps too hasty to attribute an error to Szilard, without recognizing the possibility that Szilard may have conceptualized copying in a sense different from Landauer's.

Whether (or how closely) these bounds of $k_B T\log 2$ and $0$ for copying in the senses of Szilard and Landauer are achievable is a different, altogether more subtle question, especially once one abandons reversible and isothermal limits. We will get to this question only in our next paper.

\section{The General Case}
We now state and prove a much more general correspondence between information and free energy. We have already identified the relative entropy $D(p || \pi)$ with the information we possess about a system with equilibrium distribution $\pi$. The Szilard-Landauer correspondence allows us to generalize the above calculation to talk about erasing, and randomizing, information in this abstract sense. It does so by relating available free energy to $k_B T$ times the relative entropy (Theorem~\ref{lem:SL}).

\begin{definition}
Let $n\in\MZ_{\geq 1}$. Given ``energies'' $E:\{1,2,\dots,\}\rightarrow\MR$ and a ``temperature'' $T\in\MR_{\geq 0}$, define
\begin{enumerate}
\item the {\em average energy} $\langle E \rangle_p := \sum_{i=1}^n p_i E(i)$, where $p\in\Delta^n$.

\item the {\em free energy} $F_{E,T}:\Delta^n\rightarrow \MR$ by $p \mapsto \langle E \rangle_p - k_BT \op{H}(p).$

\item the {\em partition function} $Z_{E,T}:= \sum_{i=1}^n e^{-\frac{E(i)}{k_BT}}.$

\item the {\em Gibbs distribution} $\pi_{E,T}:=\frac{1}{Z_{E,T}}(e^{-\frac{E(1)}{k_BT}},e^{-\frac{E(2)}{k_BT}},\dots,e^{-\frac{E(n)}{k_BT}}).$

\end{enumerate}
\end{definition}

\begin{remark}
We have chosen to restrict to finite spaces merely for matters of pedagogy, to bring out the key ideas without distracting technical details. Everything that follows should go through in a more general setting where the finite state space is replaced by a manifold, Shannon entropy by differential entropy, and summation by integration.
\end{remark}

\begin{remark}
The thermodynamic notion of free energy is that it equals the maximum work that can be extracted from the system.  A system has ``internal energy,'' but it also has entropy, or randomness. Before one can access the energy, one has to pay to reduce the randomness. In other words, the maximum useful work one can extract equals the average energy minus the average randomness. But randomness is dimensionless, so one has to multiply it by $k_B T$.
\end{remark}

\begin{remark}
Our free energy is a restriction to finite spaces of the free energy functional in \cite[Equation~(5)]{jordan1998variational}. As noted there, this form of the free energy functional is a Lyapunov function for the Fokker-Planck equation, i.e., it satisfies an analog of Boltzmann's H-theorem. This is another sense in which we find it justified to call this function as ``free energy.''
\end{remark}

\begin{theorem}[Szilard-Landauer Correspondence]\label{lem:SL}
Let $n\in\MZ_{\geq 1}$. Let $E:\{1,2,\dots,n\}\rightarrow\MR$ and $T\in\MR_{\geq 0}$. Let $\pi=\pi_{E,T}$ be the Gibbs distribution. Then $F_{E,T}(p) - F_{E,T}(\pi) = k_BT D(p || \pi)$ for all $p\in\Delta^n$.
\end{theorem}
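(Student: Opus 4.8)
The plan is to reduce the statement to a one-line computation by inverting the definition of the Gibbs distribution. Assume first that $T>0$, so that every component $\pi_i = e^{-E(i)/(k_BT)}/Z_{E,T}$ is strictly positive; this makes the support hypothesis for $D(p||\pi)$ automatic for every $p\in\Delta^n$, and any terms with $p_i=0$ drop out under the convention $0\log 0 = 0$. Taking logarithms in the defining formula for $\pi_i$ gives $E(i) = -k_BT\log\pi_i - k_BT\log Z_{E,T}$, which expresses the energies through the equilibrium distribution and the single state-independent constant $\log Z_{E,T}$.

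Next I would substitute this expression for $E(i)$ into $F_{E,T}(p) = \langle E\rangle_p - k_BT\op{H}(p) = \sum_{i=1}^n p_iE(i) + k_BT\sum_{i=1}^n p_i\log p_i$. The average-energy term becomes $-k_BT\sum_i p_i\log\pi_i - k_BT\log Z_{E,T}$, using $\sum_i p_i=1$ to extract $\log Z_{E,T}$ from the sum; combining it with the entropy term collects the $p$-dependent sums into $k_BT\sum_i p_i\log(p_i/\pi_i)$. This yields the key identity
\[
F_{E,T}(p) = k_BT\,D(p||\pi) - k_BT\log Z_{E,T},
\]
which already shows that free energy and relative entropy agree up to an additive, $p$-independent constant.

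Finally, evaluating this identity at $p=\pi$ and using $D(\pi||\pi)=0$ identifies that constant: $F_{E,T}(\pi) = -k_BT\log Z_{E,T}$. Subtracting gives $F_{E,T}(p)-F_{E,T}(\pi) = k_BT\,D(p||\pi)$ for all $p\in\Delta^n$, as claimed. The case $T=0$ is degenerate, since $Z_{E,T}$ and $\pi_{E,T}$ are not given by the stated formulas there; I would either exclude it explicitly or read the statement as the $T\to 0^+$ limit. There is no substantive obstacle: the proof is genuinely a short calculation once the Gibbs formula is inverted, and the only point needing care is not to discard the $\log Z_{E,T}$ term prematurely, since it is exactly what makes $F_{E,T}(\pi)$ nonzero and so must be carried through to the final subtraction.
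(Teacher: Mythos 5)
Your proposal is correct and follows essentially the same route as the paper: invert the Gibbs formula to write $E(i) = -k_BT\log\pi_i - k_BT\log Z_{E,T}$, substitute into $F_{E,T}(p)$ to get $F_{E,T}(p) = k_BT\,D(p\,\|\,\pi) - k_BT\log Z_{E,T}$, and then identify $F_{E,T}(\pi) = -k_BT\log Z_{E,T}$. The only (harmless) differences are that you obtain $F_{E,T}(\pi)$ by setting $p=\pi$ in the derived identity rather than by a separate direct computation, and that you explicitly flag the $T>0$ requirement and the $0\log 0$ convention, which the paper leaves implicit.
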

\begin{proof}
Let $F=F_{E,T}$. Then
\begin{align*}
   F(p) &= \langle E \rangle_p - k_BT \op{H}(p)\
\\ &= \sum_{i=1}^n p_i E_i + k_B T p_i \log p_i
\end{align*}

Fix $i\in\{1,2,\dots,\}$. We have 
\begin{align*}
\log \pi_i &= -\log Z - \frac{E_i}{k_BT}\
\\\Rightarrow E_i &= - k_B T\log\pi - k_B T\log Z\
\\\Rightarrow p_i E_i + k_B T p_i \log p_i &=  k_B T( - p_i\log\pi - p_i\log Z + p_i\log p_i)\   
\end{align*}

Summing the last expression over all $i$, we have:

\begin{align*}
F(p) &= k_BT D(p || \pi) - k_B T \log Z
\end{align*}

It is now enough to show that $F(\pi) = -k_B T \log Z$. This is true because
\begin{align*}
F(\pi) &= \langle E\rangle_\pi - k_BT H(\pi)\
\\ &= \sum_{i=1}^n E_i\pi_i + k_BT \pi_i \log \pi_i\
\\ &= \sum_{i=1}^n E_i\pi_i + k_BT \pi_i \log e^{-\frac{E_i}{k_BT}} - \pi_ik_BT \log Z\
\\ &= \sum_{i=1}^n E_i\pi_i - k_BT\pi_i\frac{E_i}{k_BT} - \pi_ik_BT\log Z\
\\ &= - k_B T\log Z.
\end{align*}
\end{proof}

\begin{remark}\label{rmk:landprin}
The identification between information processing and thermodynamic tasks obtained from the Szilard-Landauer principle allows a more abstract style of analysis of information-processing systems. It allows us to make statements of the form ``if an operation is erasing $n$ bits of information then it must require at least $n k_B T\log 2$ units of energy,'' without requiring a detailed understanding of the dynamics of the system doing the information processing. From the calculations in Subsection~\ref{ss:bitops}, we can now state that erasing, as well as copying in the sense of Szilard, must cost at least $k_B T\log 2$ energy. No lower bound can be obtained for copying in the sense of Landauer, NOT, and Switching. By randomizing a bit, we can obtain no more than $k_B T\log 2$ energy. Thus we recover as special cases the bounds on the cost of information processing from the works of Szilard, Landauer, and Bennett, based on analyses of Maxwell's demon.
\end{remark}

\begin{remark}\label{rmk:2ndlaw}
With this equivalence between relative entropy and free energy in hand, one can ask what processes reduce relative entropy (i.e., satisfy the Second Law). Answers to this, and related questions, can be found in \cite{Cover1994SecondLaw} and \cite[2.9]{cover2012elements}. In particular, for Markov chains, {\em even if detailed balance does not hold}, so long as there exists a stationary distribution, entropy relative to the stationary distribution is non-increasing. This yields a very powerful hammer against proposals for Maxwell's demons: ask if the dynamics of the joint system can be described in a Markovian manner, and if a stationary distribution exists. If these conditions are satisfied, then the proposal won't be able to violate the Second Law.
\end{remark}

\begin{remark}
Theorem~\ref{lem:SL} provides an equivalence between information as measured by relative entropy, and the available free energy $F(p) - F(\pi)$. To express this more vividly, let us use the term ``battery'' to informally denote a system that is a store of free energy. Then the theorem says that every battery is a system about which we know information, and every known bit of information can be viewed as available free energy. Charging a battery corresponds to erasing a bit, and requires energy to be supplied to the system. Discharging a battery corresponds to randomizing a bit, and useful work can be obtained in this process. The below table summarizes this correspondence.
\[
\begin{array}{ccc}
\text{Information} &\leftrightarrow &\text{Energy}\
\\\text{Relative Entropy} &\leftrightarrow &\text{Available Free Energy}\
\\\text{Bits} &\leftrightarrow &\text{Batteries}\
\\\text{Erasing a bit}&\leftrightarrow&\text{Charging a battery}\
\\\text{Randomizing a bit}&\leftrightarrow&\text{Discharging a battery}\
\\\text{Unreliable bit}&\leftrightarrow&\text{Leaky battery}
\end{array}
\]
\end{remark}

\begin{remark}
Following Bennett~\cite{bennett82thermodynamics}, statements referring to the observation that erasing a bit costs at least $k_B T\log 2$ are commonly referred to as Landauer's principle. We are proposing that a more appropriate name would be the Szilard-Landauer correspondence, to acknowledge Szilard's pioneering work in this direction, as well as to recognize that these ideas constitute more than a principle. In fact, they provide a dictionary between information processing and thermodynamics as we have remarked.
\end{remark}

\begin{remark}
A somewhat radical interpretation of the Szilard-Landauer correspondence is that it demystifies energy, by revealing it to be information. Information often has a combinatorial interpretation, whereas to our eyes energy is a somewhat more arcane quantity. Taking this point of view to its logical conclusion would require reinterpreting every occurrence of energy in physics in terms of information. In some sense, the combinatorial point of view is not so new and radical, and has been available since Maxwell pioneered the view of thermodynamics as a statistical consequence of classical mechanics. It has certainly been mathematically exploited on numerous occasions, as evidenced by the central role played by partition functions. However, the mental picture of the worker in statistical mechanics continues to be in terms of energy. Augmenting this mental picture with one in terms of information may have some advantages in terms of pedagogy, and also when it comes to putting statistical mechanics on common ground with other areas of information systems theory like statistical inference and machine learning with which it shares many common techniques.
\end{remark}

\begin{remark}
It has been pointed out by several authors~\cite{lecerf1963machines,bennett73logical, fredkin1982conservative} that loss of information is not essential to the logical operations employed in the computing process. Therefore, the Szilard-Landauer correspondence provides no non-trivial lower bound to the cost of computing. It is similarly ineffective in providing lower bounds to the cost of switching. If one actually believes that computation (and hence switching) can be done for zero or very low cost, then this needs to be shown by a constructive argument that is realistic enough to be approximately implementable. On the other hand, if one believes that computation (and hence switching) need substantial amounts of energy, then one still needs to investigate richer models, in an effort to make transparent precisely where the cost is incurred. Either belief requires us to investigate less ideal, more detailed models, which is what we start doing in our next paper.
\end{remark}

\paragraph{Acknowledgments:} I thank Pulkit Grover from Carnegie Mellon University,  Nick Jones from Imperial College London, and Rahul Dandekar from TIFR Mumbai for helpful discussions.  
\bibliographystyle{amsplain}
\bibliography{../../eventsystems}
\end{document}